\documentclass[11pt,letterpaper]{article}
\usepackage[utf8]{inputenc}

\pdfoutput=1

\usepackage{fullpage}
\usepackage{amsmath}
\usepackage{palatino}
\usepackage{macros}
\usepackage{latexsym} 
\usepackage{mathtools}
\usepackage{bbm}
\def\eps{\epsilon}

\usepackage{tikz}
\usetikzlibrary{arrows,positioning, shapes.geometric}

\usepackage{enumitem}

\usepackage{thmtools} 
\usepackage{thm-restate}

\def\cV{{\cal V}}
\def\cS{{\cal S}}
\def\cT{{\cal T}}
\def\cE{{\cal E}}

\def\cR{{\cal R}}
\def\load{{\operatorname{load}}}
\def\supp{{\mathrm{supp}}}

\allowdisplaybreaks

\title{On Thin Perfect Matchings up to Polylogarithmic Factors}
\author{
Alireza Haqi
\thanks{Department of Computer Science,
Stanford University. Email: \protect\url{ahaqi@stanford.edu}}
\and
Shayan Oveis Gharan
\thanks{Department of Computer Science and Engineering, University of Washington. Email: \protect\url{shayan@cs.washington.edu}}}

\begin{document}
\maketitle
\begin{abstract}
We resolve the thin matching problem proposed by
Anari, Charikar and Ramakrishnan [ACR23] up to polylogarithmic factors. 
Given a fractional perfect matching $x$, we say a perfect matching $M$ is $\alpha$-thin w.r.t. $x$ if for any cut $(S,\overline{S})$, we have
$$ |M \cap E(S,\overline{S})| \leq \alpha\cdot x(S,\overline{S}).$$
[ACR23] conjectured that for any fractional  perfect matching $x$, there exists a perfect matching $M$ which is $O(1)$-thin w.r.t. $x$.

First, we show that if $M$ is restricted to be in the support of $x$, then $\alpha \geq \Omega(n)$ and we complement this by designing an efficient algorithm that outputs an $O(n\log n)$-thin perfect matching where $n$ is the number of vertices.

Then, we relax this constraint and show that for any fractional  perfect matching $x$, there is a perfect matching $M$ (which is not necessarily in the support of $x$) such that $M$ is $\text{polylog}(n)$-thin w.r.t. $x$. All results work for both bipartite and non-bipartite graphs. We also discuss applications to the metric distortion problem.
\end{abstract}

\thispagestyle{empty}
\newpage 
\thispagestyle{empty}
\newpage
\setcounter{page}{1}

\section{Introduction}
Given a (weighted) graph $G=(V,E,x)$ with weights $x:E\to\R_{\geq 0}$, a subgraph $F\subseteq E$ is $\alpha$-thin with respect to $G$ if for any non-emptyset $S\subsetneq V$ we have
$$ |F \cap E(S,\overline{S})|\leq \alpha\cdot x(S,\overline{S}),$$
where $F \cap E(S,\overline{S})$ is the set of edges of $F$ in the cut $(S,\overline{S})$ and $x(S,\overline{S})$ is the sum of the weights of edges in this cut (in $G$). Over the last few decades there has been an extensive literature on proving existence of thin subsets or finding them algorithmically. To this date, the literature has been more focused on a search for thin spanning trees, i.e., when $F$ is a spanning tree of $G$ \cite{AGMOS17,gharan2011boundedgenus, anari2015effective,harvey2014pipage, klein2023thin, klein2026thin, gharan2025unweighted}. Most notably, the following conjecture remains open.
\begin{conjecture}[Thin Tree Conjecture]
    Given a weighted graph $G=(V,E,x)$ where $x$ is in the spanning tree polytope of $G$. There exists a spanning tree $T$ that is $O(1)$-thin with respect to $G$.  
\end{conjecture}
In this paper, we study a different combinatorial structure namely thin (perfect) matching with applications to the metric distortion problem. 
The problem was first introduced in
\textcite{anari2023distortion}, where it was shown that any (weighted) bipartite graph $G=(V,E,x)$ where $x$ is a fractional perfect matching of $G$ has an 
\(O(n^2)\)-thin perfect matching. Throughout the paper we say $x$ is a fractional perfect matching if it is a convex combination of perfect matchings (or equivalently if it is in the perfect matching polytope of $G$).
This was then used to derandomize a (randomized) mechanism that outputs a perfect matching with small expected cost w.r.t. an unknown metric (see \cref{sec:application-derandomization} for more details).

\subsection{Our Contribution}
Given a weighted graph $G=(V,E,x)$, where $x$ is a fractional perfect matching, our first observation is that, in the worst case, every perfect matching in the support of $x$ is $\Omega(n)$ thin. To see that, consider the graph of \cref{fig:badexample}. In this bipartite graph with $2n+2$ vertices it is not hard to see that any perfect matching must contain at least one dotted edge. On the other hand, it is not hard to see that any dotted edge, say $(z_1,y)$ is in a cut of fraction $2/n$, e.g., $(S=\{y_1,z_1\}, V-S)$. It thus follows that any set $F$ that has at least one dotted edge is no better than $n/2$-thin. Therefore, any perfect matching of $G$ is no better than $n/2$-thin.
\begin{figure}[htb]\centering
    \begin{tikzpicture}
        \foreach \i/\l/\x/\y in {z_1/z1/0/0,y_1/y1/0/2,z_2/z2/2/0,y_2/y2/2/2,z_{n}/zn/6/0,y_{n}/yn/6/2,z/z/9/0,y/y/9/2}{
        \node [draw,circle] at (\x,\y) (\l) {\small $\i$};
        }
        \path [line width=1.1pt] (z1) edge(y1) (z2) edge (y2) (zn) edge (yn);
        \path [dotted,line width=1.1pt] (z) edge (y1) (z) edge (y2) (z) edge (yn) (y) edge (z1) (y) edge (z2) (y) edge (zn);
        \draw [line width=3pt,line cap=round, dash pattern=on 0pt off 2\pgflinewidth] (3,0) -- (5,0) (3,2) -- (5,2);
    \end{tikzpicture}
    \caption{In the above fractional perfect matching, solid edges have fraction $1-1/n$ and the dotted edges have fraction $1/n$. }
    \label{fig:badexample}
\end{figure}

\begin{lemma}[Linear thinness lower bound]
\label{lem:support-linear-lower-bound}
There is a graph \(G=(V,E)\) with
\(|V|=2n+2\) and a point \(x\in\mathcal{PM}(G)\) such that every
perfect matching \(M\subseteq \operatorname{supp}(x)\) satisfies
\[
\max_{\emptyset\ne S\subsetneq V}
\frac{|M\cap \delta_G(S)|}{x(\delta_G(S))}
\ge \frac{n}{2}.
\]
\end{lemma}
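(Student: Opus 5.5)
The plan is to use the graph of \cref{fig:badexample}. Take $V=\{y_1,\dots,y_n,y,z_1,\dots,z_n,z\}$, so $|V|=2n+2$. The solid edges are $y_iz_i$ with $x=1-1/n$, and the dotted edges are $zy_i$ and $yz_i$ with $x=1/n$. The argument has three steps: verify that $x\in\mathcal{PM}(G)$, show that every perfect matching of $G$ uses a dotted edge, and exhibit a light cut through each dotted edge.

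\textbf{Step 1: $x\in\mathcal{PM}(G)$.} The graph is bipartite, with sides $\{y_1,\dots,y_n,z\}$ and $\{z_1,\dots,z_n,y\}$. By Birkhoff's theorem, the bipartite perfect matching polytope is exactly the set of nonnegative vectors with all vertex degrees equal to $1$, so it suffices to check degrees. Each $y_i$ has degree $(1-1/n)+1/n=1$, each $z_i$ likewise, and $z$ and $y$ each have degree $n\cdot(1/n)=1$. Alternatively, one can write $x$ explicitly as the average of the $n$ perfect matchings $M_j=\{zy_j,\,yz_j\}\cup\{y_iz_i:i\ne j\}$, since then each solid edge appears with frequency $(n-1)/n$ and each dotted edge with frequency $1/n$.

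\textbf{Step 2: every perfect matching uses a dotted edge.} In $\operatorname{supp}(x)=E(G)$ the only edges at $z$ are dotted. So any perfect matching $M$ must cover $z$ by some dotted edge $zy_j$. Note that the $-2$ restriction $M\subseteq\operatorname{supp}(x)$ is automatic here, because $G$ is the support.

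\textbf{Step 3: a light cut through each dotted edge.} For a dotted edge $zy_j$, take $S=\{y_j,z_j\}$. The cut $\delta(S)$ consists of exactly $zy_j$ and $yz_j$, since the solid edge $y_jz_j$ lies inside $S$. Hence $x(\delta(S))=2/n$ while $|M\cap\delta(S)|\ge 1$, so the ratio is at least $n/2$. The case $yz_j$ is symmetric, and in fact Step 2 alone already suffices.

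I do not expect a real obstacle. The only point needing care is Step 1, where one must justify that $x$ lies in the perfect matching polytope rather than merely satisfying the degree constraints. Either the explicit convex combination or bipartiteness handles this.
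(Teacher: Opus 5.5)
Your proof is correct and is essentially the paper's argument: the same graph, the observation that $z$ can only be covered by a dotted edge, and the cut $S=\{y_j,z_j\}$ of value $2/n$. One small slip is that the bipartition stated in Step 1 has $y$ and $z$ swapped (it should be $\{y_1,\dots,y_n,y\}$ versus $\{z_1,\dots,z_n,z\}$), but your explicit decomposition $x=\frac{1}{n}\sum_j M_j$ establishes $x\in\mathcal{PM}(G)$ on its own, so nothing breaks.
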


Given a point $x:E\to\R_{\geq 0}$ we use $\supp(x)$ to denote the support of $x$, i.e., $\{e: x_e>0\}.$
The above example shows a natural contrast between the thin spanning tree problem and the thin perfect matching problem: Even though we expect any point $x$ in the spanning tree polytope has an $O(1)$-thin spanning tree $T$ such that $T\subseteq \supp(x)$ we can only hope to get $\Omega(n)$ thinness if we switch to the perfect matching polytope.

In our first result, we design a randomized algorithm to find a perfect matching $M$ such that $M\subseteq \supp(x)$ and $M$ is $O(n\log n)$-thin, i.e., we only pay additional logarithmic loss in the thinness in the worst case.  
\begin{theorem}[Support-constrained thin matching] \label{thm:main-support-constrained-matching}
Given a graph \(G=(V,E,x)\) where $x$ is a fractional perfect matching, there is a
perfect matching \(M\subseteq \operatorname{supp}(x)\) that is $O(n\log n)$-thin, i.e.
\[
|M\cap \delta(S)|
\le
O(n\log n)\,x(\delta(S))
\qquad
\forall S\subseteq V.
\]
\end{theorem}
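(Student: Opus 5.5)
Our plan is to sample a perfect matching at random from a distribution whose marginals are controlled by $x$, bound the expected number of its edges in each cut, and then take a union bound over cuts using a cut-counting argument. The factor $n$ will come from the marginals, and the $\log n$ from the union bound.

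\textbf{Step 1: the right comparison quantity.} The natural bound is $|M\cap\delta(S)|\le |\delta(S)\cap\supp(x)|$, but this can be large compared with $x(\delta(S))$ when the cut contains many tiny edges. So we first reduce the support. Write $x$ as a convex combination of at most $|E|+1$ perfect matchings (Carath\'eodory). A better route is to use the structure of vertices of the face of $\mathcal{PM}(G)$ containing $x$. Any perfect matching $M$ in $\supp(x)$ satisfies $|M\cap \delta(S)|\le |S|$ and $|M\cap\delta(S)|\le n-|S|$. Also, a perfect matching intersects every odd cut, and $x(\delta(S))\ge 1$ for odd $|S|$. So we only need to control cuts where $x(\delta(S))$ is small, and there we want $|M\cap\delta(S)|\lesssim n\log n\cdot x(\delta(S))$.

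\textbf{Step 2: a random matching with marginals at most $O(n)x_e$.} Define $y_e=\min(1, c\,n\,x_e)$ restricted to $\supp(x)$. We want a perfect matching $M\subseteq\supp(x)$ with $\Pr[e\in M]\le O(n)\,x_e$ and with negative correlation or concentration along cuts. Our concrete choice is to sample $M$ from the decomposition $x=\sum_i\lambda_i M_i$ with probability $\lambda_i$. This gives exactly $\Pr[e\in M]=x_e$, so $\mathbb{E}|M\cap\delta(S)|=x(\delta(S))$. Concentration fails, however, since matchings in the decomposition are arbitrary. The fix is to note that $|M\cap\delta(S)|\le n$ always. By Markov's inequality, $\Pr\bigl[|M\cap\delta(S)|> t\,x(\delta(S))\bigr]\le 1/t$, which is useless for union bounds. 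Hence we instead use a matching-based dependent rounding: pipage or swap rounding inside the perfect matching polytope of the bipartite double cover, or Edmonds' description in the non-bipartite case, which yields negative-cylinder dependence on edge indicators. This gives the Chernoff bound
\[
\Pr\bigl[|M\cap\delta(S)|\ge (1+\delta)\mu\bigr]\le e^{-\mu\delta^2/3}, \qquad \mu=x(\delta(S)).
\]

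\textbf{Step 3: union bound via cut counting.} Take a threshold $\alpha=Cn\log n$. A cut $S$ is violated only if $|M\cap\delta(S)|>\alpha x(\delta(S))$. Since $|M\cap\delta(S)|\le n$, this needs $x(\delta(S))<1/(C\log n)$, and hence the cut consists only of even-sized sides. Now group cuts by the value $k=|M\cap\delta(S)|$. The Chernoff bound gives probability at most $(e\mu/k)^k\le (1/(C'\log n\cdot n))^{k}$ when $\mu\le k/\alpha$. It remains to count the relevant cuts. Cuts are determined, up to the relevant edges, by which support edges they cut, and cuts of $x$-value below $\lambda_{\min}$ times $\beta$ number at most $n^{O(\beta)}$ (Karger). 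The main obstacle is that $x$ may have no good minimum-cut normalization.

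\textbf{Step 4: contraction to remove the obstacle.} We plan to handle this by contracting tight structure. Consider the graph on heavy edges with $x_e\ge 1/(2n)$. Edges lighter than $1/(2n)$ contribute at most $n^2\cdot 1/(2n)$ in total, which is too much, so instead we iteratively contract components of the heavy graph, in the spirit of the example in \cref{fig:badexample}. The matching $M$ is then chosen so that it uses at most $O(\log n)$ light edges per cut in an appropriate laminar family. The expected hardest part is exactly this Step 3 to 4 interplay: making the union bound over exponentially many small cuts pay only $\log n$. If the contraction argument fails, the fallback is to show directly that every cut of $x$-value at most $1/n$ meets $\supp(x)$ in a structured way, and to pick $M$ avoiding such cuts greedily.
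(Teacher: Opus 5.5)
Your plan has a genuine gap at the step that carries the argument.

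\textbf{The concentration tool in Step 2 does not exist.} No distribution over perfect matchings with marginals $x$ is negatively cylinder dependent in general, so pipage or swap rounding cannot give the Chernoff bound you state. (Standard dependent rounding for bipartite matchings only gives negative correlation among edges sharing a vertex.) Concretely, take an even cycle $v_1,\dots,v_{2m}$ with $x_e=1/2$ on every edge. The only perfect matchings are the two alternating ones, and these marginals force the uniform distribution on them. Let $S=\{v_2,v_3\}\cup\{v_6,v_7\}\cup\dots$ with $j$ blocks. Then $\delta(S)$ consists of $2j$ edges of one parity class and $\mu=j$. Yet $|M\cap\delta(S)|=2j$ with probability $1/2$, which contradicts $e^{-\mu\delta^2/3}$ at $\delta=1$ once $j\ge 3$.

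\textbf{Steps 3--4 leave the real obstacle open}, as you acknowledge. $x$ can have even cuts of arbitrarily small value, so Karger's $n^{O(\beta)}$ count has no usable normalization. The heavy-edge contraction, the unspecified laminar family, and the greedy fallback are plans rather than arguments. As written, nothing controls the union bound over the exponentially many low-value cuts.

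\textbf{The missing idea} is to manufacture the normalization by trimming, and to avoid rounding to a single matching altogether.
\begin{itemize}
\item \emph{Trimming.} The paper repeatedly zeroes out any cut $(S,C-S)$ of current value at most $\varepsilon=1/(8n)$ inside a current component $C$. Each step increases the number of components, so there are at most $|V|-1$ steps and the total weight removed is below $1/4$. Every surviving component is $1/(8n)$-edge-connected, which is exactly what Karger-type cut counting needs with $\alpha=\Theta(n\log n)$.
\item \emph{Independent sampling.} It then samples $\alpha$ \emph{independent} copies of each edge with probability $\widetilde x_e$. Honest Chernoff bounds now apply, so the sampled graph $Y$ is $O(\alpha)$-thin inside each component, and $Y$ has no edges between components.
\item \emph{Perfect matching via Tutte.} Take a witness $(U,\{O_1,\dots,O_q\})$ with $q>|U|$ and set $F=\bigcup_i\delta(O_i)\setminus\delta(U)$. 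The odd-set and degree constraints give $x(F)\ge (q-|U|)/2\ge 1/2$, hence $\widetilde x(F)\ge 1/4$. So $\Pr[Y\cap F=\emptyset]\le e^{-\alpha/4}$, which beats the at most $(2n+1)^{2n}$ witnesses in a union bound.
\end{itemize}
Any perfect matching contained in $Y$ then lies in $\operatorname{supp}(x)$ and is $O(n\log n)$-thin.

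The point your plan misses is this trade-off: losing $O(|V|\varepsilon)$ total weight is harmless against Tutte witnesses, which each carry weight at least $1/2$. Yet it buys the $\Omega(1/n)$ connectivity needed for the union bound over cuts.
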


It turns out that for applications to the metric distortion problem \cite{anari2023distortion} it is not necessary that $M$ is a subset of the support of $x$. So, we naturally look for thin perfect matchings without this extra constraint. We note that if $M$ does not need to be a subset of the support, then the graph of \cref{fig:badexample} has an $O(1)$-thin perfect matching: namely match each $(y_i,z_i)$ for $1\leq i\leq n$ and $(y,z)$. Our second result shows that if we do not restrict the thin-matching to be a subset of edges of $x$ we can find $\polylog(n)$-thin perfect matchings with respect to any fractional perfect matching $x$.

\begin{theorem}[Relaxed thin matching] \label{thm:main-relaxed-thin-matching}
Given a bipartite graph \(G=(L\cup R,E,x)\)  with \(|L|=|R|=n\), and
 a fractional perfect matching \(x\), there exists a
perfect matching \(M\subseteq L\times R\), not necessarily contained in
\(\operatorname{supp}(x)\), such that
\[
|M\cap \delta(S)|
\le
O(\log(n) \log \log (n))\,x(\delta(S))
\qquad
\forall S\subseteq L\cup R.
\]
More generally, if \(G=(V,E)\) is a graph with \(|V|=2n\) and
\(x\in\mathcal{PM}(G)\), then there is a perfect matching
\(M\subseteq \binom{V}{2}\), not necessarily contained in
\(\operatorname{supp}(x)\), such that
\[
|M\cap \delta(S)|
\le
O(\log(n) \log \log (n)) \,x(\delta(S))
\qquad
\forall S\subseteq V.
\]
\end{theorem}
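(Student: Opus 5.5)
The plan is to build $M$ along a hierarchical (tree) decomposition of $G$ with capacities $x$, and to use the odd-set constraints of the perfect matching polytope to pay for every matching edge that leaves a cluster. The starting observation is that for $x\in\mathcal{PM}(G)$ and every odd set $U\subseteq V$ we have $x(\delta(U))\ge 1$, because every perfect matching crosses an odd cut. Even sets come with no such lower bound, so the matching should only cross the boundaries of odd clusters.

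\textbf{Step 1: the tree.} Fix a rooted laminar family of clusters, i.e.\ a tree $T$ whose leaves are the vertices of $V$ and whose internal nodes correspond to sets $U_t\subseteq V$. Give the tree edge above $t$ the capacity $x(\delta(U_t))$. We need $T$ to satisfy a one-sided cut/flow approximation with factor $\alpha=O(\log n\log\log n)$: every multicommodity demand routable in $T$ with congestion $1$ must be routable in $(G,x)$ with congestion at most $\alpha$. For a single tree, this means each tree edge $e=(t,\mathrm{parent}(t))$ comes with a unit flow $f_e$ in $G$ from $U_t$ to the parent cluster, and these flows satisfy $\sum_e f_e\le \alpha\, x$ edgewise. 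Such embeddings are provided by Räcke-type decompositions; the $\log n\log\log n$ factor should come from an expander-hierarchy or well-linked decomposition.

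\textbf{Step 2: bottom-up pairing.} Process $T$ from the leaves up. At each node $t$, take the at most one unmatched vertex passed up from each child together with any leaves at $t$, and pair them arbitrarily within $U_t$. If their number is odd, pass exactly one vertex to the parent. By induction, a vertex leaves $U_t$ unmatched if and only if $|U_t|$ is odd, and then exactly one vertex does. The result is a perfect matching $M\subseteq\binom V2$ (a pair may lie outside $\operatorname{supp}(x)$) such that, for every tree edge above $t$, at most one $M$-edge crosses it, and only when $|U_t|$ is odd. Since $x(\delta(U_t))\ge 1$ in that case, routing $M$ along tree paths has congestion at most $1$ in $T$. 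In the bipartite case the same argument works with $M\subseteq L\times R$, provided we pair an $L$-vertex with an $R$-vertex and pass up the surplus side. The odd-set bound is then replaced by $x(\delta(U))\ge \big||U\cap L|-|U\cap R|\big|$, which holds for any fractional perfect matching.

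\textbf{Step 3: transfer to cuts of $G$.} Fix $S$. Each $M$-edge $uv\in\delta(S)$ has a tree path, and its concatenated embedded flow $\bigcup_{e\in P_{uv}} f_e$ is a unit $u$–$v$ flow in $G$, so it crosses $\delta(S)$ with value at least $1$. Summing over these $M$-edges and using that every tree edge carries at most one $M$-edge, we get
\[
|M\cap\delta(S)|\le \sum_e f_e(\delta(S))\le \alpha\,x(\delta(S)).
\]
This argument is identical for general graphs.

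\textbf{Main obstacle.} The hard step is Step 1: obtaining a \emph{single} tree with polylogarithmic congestion and a concrete per-edge embedding. Räcke's optimal $O(\log n)$ guarantee holds only for a distribution over trees, and the parity-based matching of Step 2 cannot be averaged over such a distribution. I would first try a deterministic hierarchical decomposition built from $\phi$-expander or well-linked pieces with $\phi\approx 1/\log n$ and depth $O(\log n/\log\log n)$-style bookkeeping. In such a decomposition each cluster's boundary edges can be routed to its parent cluster inside it, with the routing loss per level and the number of levels multiplying to $O(\log n\log\log n)$. If this only yields a weaker polylog factor, Steps 2–3 still go through unchanged with that factor.
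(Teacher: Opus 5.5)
Your Step 2 and the bound $x(\delta(U))\ge \Delta(U)$ are exactly what the paper does: its Algorithm~2 achieves $\load_M(f)=\Delta(V^{r(f)})$ on every tree edge. Step 1, however, is a genuine gap. You need a \emph{single} tree with a \emph{flow} embedding into $(G,x)$ of congestion $O(\log n\log\log n)$, which is not known. The $O(\log n\log\log n)$ single-tree bound the paper invokes (R\"acke--Shah) is for tree \emph{cut} sparsifiers. Single-tree flow embeddings of the hierarchical-decomposition type are known only with congestion $O(\log^2 n\log\log n)$ (Harrelson--Hildrum--Rao). Converting a cut sparsifier into a routing costs the multicommodity flow-cut gap, another $O(\log n)$. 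Your expander-hierarchy sketch does not close this, so as written you get the theorem only with a worse polylogarithmic factor. Separately, the unit-flow reformulation in Step 1 is wrong as stated. An even cluster with $x(\delta(U_t))<1/\alpha$ cannot send a unit of flow out at congestion $\alpha$. In the bipartite case a tree edge carries $\Delta(U_t)$ matched pairs, not at most one. You would need flows of value $x(\delta(U_t))$ between representative vertices, so that they concatenate along tree paths.

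The missing idea is that thinness is itself a cut statement, so no routing is needed; the two-sided cut guarantee of one tree suffices. Let $\cT=(\cV,\cE,y)$ be a $q$-tree-cut sparsifier with $q=O(\log n\log\log n)$. For a tree edge $f$, deleting $f$ is a tree cut projecting to $(V^{s(f)},V^{r(f)})$. The lower-bound side of the sparsifier then gives $y_f\ge x(\delta(V^{s(f)}))\ge \Delta(V^{s(f)})=\load_M(f)$. For arbitrary $S$, take the minimum-weight tree cut $(\cS,\cV-\cS)$ projecting to $(S,V-S)$. Every $M$-edge crossing $S$ has its tree path crossing $\delta_\cT(\cS)$, so
\[
|M\cap\delta(S)|\le \sum_{f\in\delta_\cT(\cS)}\load_M(f)\le y(\delta_\cT(\cS))=\lambda_\cT(S)\le q\,x(\delta(S)).
\]
This replaces your Steps 1 and 3 and gives exactly the claimed $O(\log n\log\log n)$.
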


\paragraph{Discussion.} We do not expect the thinness bounds in either of \cref{thm:main-support-constrained-matching} and \cref{thm:main-relaxed-thin-matching} to be tight. We leave it as open problems whether it is possible to eliminate poly-logarithmic dependencies on $n$ in either of these theorems. 

\subsection{Proof Overview}
First, we explain an overview of the proof of \cref{thm:main-support-constrained-matching}. First, we observe that if $G=(V,E,x)$ is $\Omega(1)$-edge connected then it has an $O(\log n)$-thin matching. The proof is similar to the technique in \cite{AGMOS17}, we sample $O(\log(n))$ copies of every edge $e$ independently with probability $x_e$; the resulting graph is $O(\log n)$-thin with high probability and we can prove that it has a perfect matching by checking Tutte's condition. More generally, given an arbitrary fractional perfect matching $x$, it turns out that we can delete a small fraction of edges of $x$ to make sure that every connected component of the resulting graph is $\Omega(1/n)$-edge connected. So, we apply a similar algorithm to every connected component, namely we sample $O(n \log n)$ copies of every edge $e$ independently with probability $x_e$ and we return a perfect matching of the resulting graph.

Next, we explain an overview of the proof  of \cref{thm:main-relaxed-thin-matching}. The proof starts by constructing a tree-cut-sparsifier of $G=(V,E,x)$ that is a weighted tree ${\cal T}=({\cal V}, {\cal E}, y)$ where the vertex set ${\cal V}$  includes $V$ with additional Steiner nodes. Consider mapping every cut $(S,V-S)$ of $G$  to the smallest weight cut of $\cT$ that separates $S$ from $V-S$. Then, $\cT$ has the property that $x(S,V-S)$ is within $\polylog(n)$ factor of the weight of its map. Since the edges of the thin matching are not restricted, the tree $\cT$ abstracts all necessary information for the problem; namely, the set of vertices and size of all cuts of $G$ (up to poly-logarithmic factors) and disregards the edge structure. Having this in mind, to make sure that a perfect matching $M$ over $V$ is thin, it is enough to make sure that for any edge $e\in \cT$ the number of pairs matched across the unique cut $\cT-e$ is as small as possible. Here, is where we use the assumption that $\cT$ is indeed a tree; we design a simple greedy algorithm (see \cref{alg:tree-pairing-with-graph-restriction}) that outputs a perfect matching minimizing the number of pairs matched across every tree cut. The proof of correctness is a simple charging argument.

\subsection{Applications to the Metric Distortion Problem}
\label{sec:application-derandomization}

We now explain an application of \cref{thm:main-relaxed-thin-matching} to a problem
studied in social choice theory. Assume that we have $n$ buyers and $n$ items where each item $j$ costs $d(i, j)$ for the buyer $i$, for all $1\leq i,  j\leq n$; unless otherwise specified, we assume that $d:[n]\times[n]\to\R_{\geq 0}$ forms a metric. A mechanism $\cR$
outputs a perfect matching $M$ between items and buyers. We define the cost of the matching $M$ as 
$$ \text{cost}(M):=\sum_{(i,j)\in M} d(i,j).$$
Then the cost of a (randomized) mechanism $\cR$ is the expected cost of $M$ over the randomness in the algorithm:
$$ \text{cost}(\cR)=\E[\text{cost}(M)] = \sum_{(i,j)} p_{i,j} d(i,j),$$
where $p_{i,j}$ is the probability that $i$ is matched to $j$ in the mechanism $\cR$.

Here, we assume that the mechanism $\cR$ is only given access to ordinal preferences of all buyers, namely it is given a total order $\sigma_{i}(1),\dots,\sigma_i(n)$ for each buyer $i$ with the property that $d(i,\sigma_i(j)) \leq d(i,\sigma_i(k))$ for all $1\leq j<k\leq n$.

\begin{definition}[Metric Distortion]
    We say a mechanism $\cR$ has (expected) distortion $D$, if 
    \[
        \text{cost}(\cR) \leq D \cdot \operatorname{OPT}_d
    \]
    where \(\operatorname{OPT}_d\) is the cost of the minimum-cost perfect
matching under \(d\) in the worst case over all possible consistent metrics.
\end{definition}

We refer interested readers to \cite{anari2023distortion} for further details and applications of this model. 
A natural question is whether a randomized mechanism with distortion $D$ can always be turned into a deterministic one with distortion $O(D)$ (independent of $n$).
This question was first raised in \textcite{anari2023distortion}  where the authors proposed an approach based on the solution to the  thin-matching problem. More precisely, they showed that if every fractional matching $x$ of a bipartite graph $G$ with $2n$ vertices has a $t(2n)$-thin perfect matching, 
 then one can  turn a randomized mechanism with distortion $D$ to a deterministic one with distortion $O(D \cdot t(2n)\log(n))$.

\begin{corollary}
\label{thm:polylog-derandomization}
Given a randomized mechanism \(\mathcal R\) with expected distortion \(D\), 
there exists a deterministic mechanism 
with distortion $O(D \cdot \log^{2}(n) \log \log(n))$.
\end{corollary}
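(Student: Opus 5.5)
The corollary follows by combining the reduction of \textcite{anari2023distortion} with \cref{thm:main-relaxed-thin-matching}. The reduction says that if every fractional perfect matching of a bipartite graph on $2n$ vertices has a $t(2n)$-thin perfect matching, then a randomized mechanism with distortion $D$ can be turned into a deterministic one with distortion $O(D\cdot t(2n)\log n)$. Plugging in $t(2n)=O(\log n\log\log n)$ gives distortion $O(D\log^2(n)\log\log(n))$.

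The main work is checking that the reduction still applies when the thin matching $M$ is allowed to leave $\supp(x)$. We need to recall why it only uses cut information. The fractional perfect matching is $x=p$, the marginals of $\mathcal R$ on the complete bipartite graph between buyers and items. Any perfect matching of $L\times R$ is a legal output for the mechanism, so choosing $M$ outside $\supp(p)$ is harmless. The reduction then needs a bound of the form
\[
\sum_{(i,j)\in M} d(i,j)\;\le\; O(t\log n)\sum_{(i,j)} p_{ij}\, d(i,j)
\]
for every metric $d$, in which case $\text{cost}(M)\le O(t\log n)\,\text{cost}(\mathcal R)\le O(D\,t\log n)\,\mathrm{OPT}_d$.

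To get this from thinness I would embed $d$ into a distribution over dominating hierarchically well-separated trees (ultrametrics) with expected stretch $O(\log n)$, which is presumably where the $\log n$ factor in the reduction comes from. On a single tree metric $d_T$, the cost of a matching is
\[
\sum_{e\in T} w_e \cdot \#\{\text{pairs crossing } e\},
\]
and each tree edge $e$ induces a cut $(S_e,\overline{S_e})$ of $V$. Thinness gives $|M\cap\delta(S_e)|\le t\cdot p(\delta(S_e))$. Summing over $e$ yields $\text{cost}_{d_T}(M)\le t\,\text{cost}_{d_T}(p)$.

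Averaging over the tree distribution then gives $d(M)\le \E_T[d_T(M)]\le t\,\E_T[d_T(p)]\le O(t\log n)\,d(p)$. The first inequality uses domination and the last uses the stretch bound. The argument uses only cuts, never $\supp(x)$, which settles the concern above.

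One subtlety is that $M$ must be chosen without knowing $d$. That holds here because $M$ depends only on $p$, which the mechanism computes from the ordinal preferences. The tree embedding is used purely in the analysis and is never needed by the mechanism. I expect no real obstacle beyond this bookkeeping, together with making sure that \cref{thm:main-relaxed-thin-matching} applies to $x=p$ on $L\cup R$. It does, since $p$ is a convex combination of perfect matchings and hence lies in the bipartite perfect matching polytope.
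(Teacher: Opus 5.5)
Your proposal is correct and follows the paper's argument. The paper gets the corollary by plugging the $t(2n)=O(\log n\log\log n)$ bound of \cref{thm:main-relaxed-thin-matching} into Proposition 3 of \cite{anari2023distortion}, which costs an extra $O(\log n)$ factor. Your tree-embedding re-derivation of that reduction is a sound justification of something the paper only asserts, namely that the reduction never needs $M\subseteq\supp(p)$. It works because the reduction only uses cut values, via $d_T(M)=\sum_e w_e\,|M\cap\delta(S_e)|$ together with domination and $O(\log n)$ expected stretch.
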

This follows by combining \cref{thm:main-relaxed-thin-matching} using the existential result for single-cut tree sparsifier \cite{racke2014improved} and Proposition 3 of \cite{anari2023distortion}.

\subsection{Acknowledgments}
The authors used GPT-5.5 Pro during the course of this research. The connection to cut-tree-sparsifiers was discovered in these interactions. The proofs are all human-written.

We would like to thank Moses Charikar for pointing us to the thin perfect matching problem and its applications to the metric distortion problem, and Sricharan Arunapuram Rangaramanujam for pointing out the existential result for tree-cut sparsifiers \cite{racke2014improved}.
The second author's research is funded by an NSF grant CCF-2203541, a Simons Investigator Award 928589, and
a Lazowska Endowed Professorship in Computer Science \& Engineering.

\section{Preliminaries}

Given a graph $G=(V,E)$, for a set $S\subseteq V,$ we write
$\delta(S):=\{e\in E: |e\cap S|=1\}$
to denote the set of edges of $G$ in the cut $(S,V-S)$. 
If $S=\{v\}$, for a vertex $v\in V$, we abuse notation and write $\delta(v)$ instead.
For a set $F\subseteq E$ and a function $y:E\to \R$, we write
$$ y(F):=\sum_{e\in F} y_e.$$

\begin{definition}[Single-tree cut sparsifier]
\label{def:tree-cut-sparsifier}
    Let $G=(V,E,x)$ be a weighted graph with $x:E\to\R_{\geq 0}$. Let  $\cT=(\mathcal{V},\mathcal{E},y)$ where $V\subseteq \mathcal{V}$ and $y:\mathcal{E}\to\R_{\geq 0}$ be a weighted tree. 
    We say a  cut $(\cS, \cV-\cS)$ projects to $(S,V-S)$ if $\{\cS\cap V, (\cV-\cS)\cap V\}=\{S,V-S\}$.
    For a cut $S \subseteq V$, let $\lambda_\cT(S)$ denote the minimum weight of all cuts of $\cT$ that project to $(S,V-S)$; that is,
    \[
    \lambda_\cT(S)
    =
    \min_{(\cS,\cV-\cS)\text{ projects to }(S,V-S)}
    \sum_{e \in \delta_\cT(\cS)} y_\cT(e)
    .
    \]
    We say that $\cT$ is a $q$-tree-cut-sparsifier of $G$ if
    \[
    x(\delta(S))
    \le
    \lambda_\cT(S)
    \le
    q\cdot x(\delta(S))
    \qquad
    \forall S \subseteq V.
    \]
\end{definition}

The best-known existential and algorithmic bounds for single-tree cut sparsifiers differ by a factor of \(\sqrt{\log (n)}\), as follows.

\begin{theorem}[\textcite{racke2014improved}] \label{thm:single-tree-cut-sparsifier}
Given a (weighted) graph $G = (V, E, x)$ with $|V| = n$, there exists a $O(\log(n) \log \log(n))$-tree-cut sparsifier for graph $G$.
\end{theorem}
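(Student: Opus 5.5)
We would prove this by building the tree $\cT$ as a \emph{hierarchical decomposition tree} of $G$, in the spirit of Räcke's cut-based decompositions. Start with the root cluster $V$ and recursively partition each cluster $C$ into child clusters until singletons remain. Each cluster is a node of $\cT$, and the leaves are identified with the vertices of $V$. The tree edge between a cluster $C$ and its parent gets weight $y(C,\mathrm{parent}(C)) := x(\delta(C))$.

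\emph{Lower bound.} Take any cut $(\cS,\cV-\cS)$ of $\cT$ that projects to $(S,V-S)$. Every edge $uv\in\delta_G(S)$ has its tree path from leaf $u$ to leaf $v$ crossing $\delta_\cT(\cS)$. If that crossing is the parent edge of a cluster $C$, then $uv\in\delta(C)$. Hence $\delta_G(S)\subseteq\bigcup_{C}\delta(C)$, where the union is over cut tree edges, and so $x(\delta(S))\le\lambda_\cT(S)$. This holds for any hierarchy.

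\emph{Upper bound.} This requires a good choice of partitions, and it is where all the work lies. We would require two properties. First, the depth is $O(\log n)$, obtained by making every child have at most a constant fraction of its parent's vertices (or of its parent's boundary weight). Second, each cluster is well-linked at every level: the boundary $\delta(C)$, together with the edges cut when splitting $C$, can be routed inside $C$ with congestion $1/\phi$.

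Given these properties, fix $S\subseteq V$. Define $\cS$ greedily top-down: put a cluster on the $S$-side if the majority of its boundary-weighted mass lies in $S$. Then charge the weight $x(\delta(C))$ of each cut tree edge to the edges of $\delta_G(S)$ inside the parent cluster. Well-linkedness means that a cluster whose label disagrees with its parent forces flow of value about $\phi\, x(\delta(C))$ across $\delta_G(S)\cap E(\mathrm{parent}(C))$. Each edge of $G$ is charged at most once per level, so $\lambda_\cT(S)\le O(\mathrm{depth}/\phi)\,x(\delta(S))$.

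\emph{Main obstacle.} The naive approach recursively takes sparsest cuts with exact conductance, which is allowed since we only need existence. This gives $\phi=\Omega(1/\log n)$ per level and depth $\log n$, which yields only $O(\log^2 n)$. To reach $O(\log n\log\log n)$ we would follow Räcke--Shah. Split each cluster only along cuts of sparsity at most $\phi\approx 1/\log\log n$ relative to the boundary weight (boundary-well-linkedness rather than vertex expansion). Then use a potential argument showing that the total boundary weight $\sum_C x(\delta(C))$ at each level is dominated by a geometric series, so the per-level losses telescope. The aim is that the $\log n$ depth contributes only additively and the well-linkedness parameter contributes the $\log\log n$ factor.

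Concretely, the step we would attempt first is the amortized accounting. We would show that along any root-to-leaf path the clusters whose boundary weight shrinks by less than a constant factor are few, namely $O(\log\log n)$ consecutive ones. Then the charging above loses only $O(\log n\cdot\log\log n)$ in total. Making this amortization work, rather than losing a full $\log n$ factor per level, is the step we expect to be the main difficulty.
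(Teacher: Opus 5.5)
The paper does not prove this statement: it is quoted from R\"acke and Shah and used only as a black box in the proof of \cref{thm:main-relaxed-thin-matching}. Judged on its own, your proposal has the right framework, a hierarchical decomposition tree with $y(C,\mathrm{parent}(C))=x(\delta(C))$. Your lower-bound argument $x(\delta(S))\le\lambda_\cT(S)$ is correct and holds for every hierarchy.

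\textbf{The missing decomposition lemma.} The whole content of the theorem is the upper bound $\lambda_\cT(S)\le O(\log n\log\log n)\,x(\delta(S))$. That requires a hierarchy that is both shallow and well-linked, losing only $O(\log\log n)$ per level, and this is exactly the step you leave open. The mechanism you suggest does not give it. Suppose you split along every cut of sparsity below $\phi\approx 1/\log\log n$ relative to the boundary weight $k$.
\begin{itemize}
\item The standard accounting charges the new cut edges to the boundary mass on the smaller side. A unit of boundary can be on the smaller side about $\log k$ times, so one round creates new boundary of up to $\phi\log k\cdot k$.
\item Iterating gives a series with ratio $\phi\log k$, which converges only when $\phi=O(1/\log k)$.
\item With $\phi=1/\log\log n$ you therefore get neither the $O(\log n)$ depth nor the geometric decay of $\sum_C x(\delta(C))$.
\end{itemize}
Your claim that slowly shrinking clusters occur only in runs of length $O(\log\log n)$ is asserted without argument. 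You also do not explain how it would combine with a per-level charge of $1/\phi$ to give the stated product. This decomposition lemma is the substance of R\"acke and Shah's proof; without it, your sketch establishes only the easy inequality.

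\textbf{The charging step fails as written.} Labeling a cluster by the majority of its own boundary $\delta(C)$ does not match the set that is well-linked in the parent. Take a cluster $P$ with small $x(\delta(P))$ attached mostly on the $\overline S$ side, while its children's boundaries lie mostly in $S$ (for instance, $S$ is all of $P$ except a small region near $\delta(P)$).
\begin{itemize}
\item Your rule cuts every child edge of $P$, at cost $\sum_D x(\delta(D))$.
\item Well-linkedness of $P$ only forces about $\phi\,x(\delta(P))$ across $\delta_G(S)\cap E(P)$, and $x(\delta(S))$ itself can be this small.
\item At the root, $\delta(V)=\emptyset$, so the rule is not even defined.
\end{itemize}
You need to label each cluster with respect to the set that is well-linked inside it: the union $B(C)$ of its children's boundaries, which contains $\delta(C)$. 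Then a child $D$ whose own boundary $\delta(D)$ lies mostly on the side opposite its label must be charged inside $D$ rather than inside its parent. This is repairable, but as written even the bound $\lambda_\cT(S)\le O(\mathrm{depth}/\phi)\,x(\delta(S))$ is not justified.
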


\begin{theorem}[\textcite{agassy2025improved}] \label{thm:single-tree-cut-sparsifier-alg}
There is a polynomial time algorithm that given a (weighted) graph $G = (V, E, x)$ with $|V| = n$, outputs an $O(\log(n)^{3/2} \log \log(n))$-tree-cut sparsifier for graph $G$.
\end{theorem}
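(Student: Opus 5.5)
The plan is to make the existential construction behind \cref{thm:single-tree-cut-sparsifier} algorithmic, replacing each exact (intractable) cut-finding step by a polynomial-time approximation whose loss is only $O(\sqrt{\log n})$. The construction is a top-down hierarchical decomposition. Start with the root cluster $V$ and recursively split each cluster $C$ into children using a balanced sparse cut of $G$ restricted to $C$, with edge weights $x$. The tree $\cT$ has one Steiner node per cluster and the vertices of $V$ as leaves. The edge from a child cluster $C'$ to its parent gets weight $y = x(\delta_G(C'))$, the total weight leaving $C'$ in $G$.

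With this choice of $y$, the lower bound $x(\delta(S)) \le \lambda_\cT(S)$ is immediate. Any tree cut projecting to $(S, V-S)$ cuts a set of tree edges whose clusters cover every $G$-edge crossing $(S, V-S)$. Hence the $G$-weight crossing $S$ is at most the sum of the $y$-values of the cut tree edges, by a union bound over the boundaries of the corresponding clusters. So all the work is in the upper bound $\lambda_\cT(S) \le q\cdot x(\delta(S))$ with $q = O(\log^{3/2} n \log\log n)$.

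For the upper bound, fix $S$ and build a tree cut top-down. At a cluster $C$, if $S \cap C$ is \emph{small} relative to the cluster's boundary weight, cut off the appropriate children individually; otherwise recurse. The charging argument bounds the total $y$-weight cut at each level by $x(\delta(S))$ times the \emph{quality} of the partition step. I would use the R\"acke--Shah--T\"aubig structure: clusters are split so that boundary weight shrinks geometrically along any root-to-leaf path. This gives $O(\log n)$ effective levels, and the extra $\log\log n$ arises from their bucketing of levels by boundary scale.

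In the existential proof, the partition step at each cluster uses an exactly optimal (or $O(1)$-approximate) balanced/sparsest cut relative to the boundary-demand measure, and this is where tractability fails. The algorithmic step substitutes the ARV semidefinite-programming approximation for (product-demand, balanced) sparsest cut. Its guarantee is an $O(\sqrt{\log n})$ factor, and since the demand here is proportional to boundary weights it is a product demand, so ARV applies directly. I would then rerun the charging argument with this approximate cut. The claim is that the approximation factor enters \emph{multiplicatively once}, not once per level, giving $O(\sqrt{\log n})\cdot O(\log n\log\log n)$.

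The main obstacle is exactly this non-compounding claim. A naive analysis multiplies the per-level loss across $\Theta(\log n)$ levels. To avoid that, I would make the per-level charge be against $x(\delta(S))$ directly, each level paying $O(\sqrt{\log n})\,x(\delta(S))$ via the sparsity guarantee of the chosen cut relative to $S$'s cut. The balance condition keeps the number of charged levels at $O(\log n\log\log n)$. If balance cannot be guaranteed by ARV directly, I would fall back on the standard trick of iteratively removing approximate sparse cuts until the remainder is an approximate expander. This costs only constant factors in the balance parameter and keeps the $\sqrt{\log n}$ loss at a single multiplicative factor.
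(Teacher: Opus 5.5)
The paper gives no proof of this statement; it imports it as a black box from the cited reference. So your outline has to stand on its own, and it does not yet.

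Two parts are fine. The lower bound $x(\delta(S))\le\lambda_\cT(S)$ holds: a $G$-edge $uv$ lies in $\delta_G(C')$ for every cluster $C'$ whose parent edge is on the $u$--$v$ tree path, so every projecting tree cut pays for it. And ARV is a reasonable source of the $\sqrt{\log n}$.

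But the upper bound $\lambda_\cT(S)\le q\cdot x(\delta(S))$ is the entire theorem, and your ``non-compounding'' claim restates it rather than proving it. The concrete missing ingredient is control of the \emph{inherited} boundary. With $y_{C'}=x(\delta_G(C'))$, cutting a child $C'$ off in the tree costs all of $\delta_G(C')$. That includes $\delta_G(C)\cap\delta_G(C')$ created at ancestors, not just the edges of the cut chosen at $C$. The ARV guarantee says the chosen cut's sparsity is within $O(\sqrt{\log n})$ of that of $S\cap C$. This bounds only newly cut edges, and even then only up to a ratio of boundary measures that is unbounded when $S\cap C$ carries little boundary. It says nothing about how $S$ splits the inherited boundary.

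Charging that inherited part to $x(\delta(S))$ needs an invariant of the kind used in hierarchical-decomposition constructions. For example, every cluster's boundary (or the union of its children's boundaries) should be $\phi$-well-linked inside the cluster. Such an invariant is what lets one embed $\cT$ into $G$ as a flow with congestion $q$, giving the upper bound via max-flow/min-cut on $\cT$. You would then have to show the recursive splitting maintains it with parameters good enough for $O(\log^{3/2}n\log\log n)$, while keeping depth and boundary growth under control. That is exactly where the approximation factor, the depth and the $\log\log n$ interact, and none of it is argued.

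Several supporting claims are also false or unjustified. Boundary weight cannot be made to shrink geometrically along root-to-leaf paths. The root has boundary $0$. A cluster split off by a sparse cut (say an expander attached to the rest by one light edge) has far smaller boundary than the leaves below it, whose boundary is $x(\delta(v))$. Depth in such constructions is controlled by cluster size, not boundary weight.

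Your count of levels also drifts from $O(\log n)$ to $O(\log n\log\log n)$, with no mechanism given for either. A split balanced in vertex count does not preserve well-linkedness of boundaries. The fallback of peeling approximate sparse cuts until an approximate expander remains does not cost ``only constant factors''. Each peeled piece adds boundary to both sides, and bounding that blow-up is the usual expander-decomposition tradeoff, which is not free. Finally, the rule for building a cheap projecting tree cut for $S$ comes with no analysis. You have named plausible ingredients, but the argument that they combine to $O(\log^{3/2}n\log\log n)$ is missing.
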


\subsection{Matching Conditions and Fractional Perfect Matchings}

\begin{theorem}[Tutte's theorem]
\label{thm:tutte}
    Let $G=(V,E)$ be a general graph. For $S\subseteq V$, let
    $o(G-S)$ be the number of connected components of $G-S$ with odd
    cardinality. Then $G$ has a perfect matching if and only if
    \[
    o(G-S)\le |S|
    \qquad
    \forall S\subseteq V.
    \]
\end{theorem}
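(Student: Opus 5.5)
The plan is the classical edge-maximality argument (due to Lov\'asz), after first disposing of the easy direction. For necessity, suppose $G$ has a perfect matching $M$ and fix $S\subseteq V$. Every odd component $C$ of $G-S$ cannot be perfectly matched within itself. Hence some vertex of $C$ is matched by $M$ to a vertex outside $C$, and that vertex must lie in $S$, since there are no edges between distinct components of $G-S$. These partners in $S$ are distinct for distinct odd components, because $M$ is a matching, so $o(G-S)\le |S|$.

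For sufficiency, assume Tutte's condition holds but $G$ has no perfect matching. Taking $S=\emptyset$ gives $o(G)=0$, so $|V|$ is even. Adding edges can only merge components of $G-S$, and merging an odd component with any other component does not increase the number of odd components, so the condition is preserved under adding edges. We may therefore replace $G$ by an edge-maximal supergraph on $V$ with no perfect matching: adding any missing edge creates a perfect matching. Let $U$ be the set of vertices adjacent to all other vertices.

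The key step is to show that every component of $G-U$ is a complete graph. Granting this, we build a perfect matching directly. Pair up the vertices inside each component of $G-U$, leaving one vertex unmatched in each odd component. Match those leftover vertices to distinct vertices of $U$, which is possible since $o(G-U)\le|U|$ and $U$ is adjacent to everything. The remaining vertices of $U$ form a clique, and they are even in number by a parity count using that $|V|$ is even, so they can be paired up. This contradicts the choice of $G$.

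The main obstacle is the completeness claim. Suppose some component of $G-U$ is not complete. Then there are $a,b,c$ in it with $ab,bc\in E$ and $ac\notin E$, and since $b\notin U$ there is some $d$ with $bd\notin E$. By maximality, $G+ac$ has a perfect matching $M_1\ni ac$ and $G+bd$ has a perfect matching $M_2\ni bd$. Consider $M_1\triangle M_2$, which is a disjoint union of even alternating cycles; we case on whether $ac$ and $bd$ lie on the same cycle. If they lie on different cycles, swap $M_1$ along the cycle containing $ac$ to obtain a perfect matching of $G$. If they lie on the same cycle $C$, use the edges $ab$ or $bc$ to shortcut $C$ so as to avoid both $ac$ and $bd$. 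Combined with the appropriate halves of $M_1$ and $M_2$, this again gives a perfect matching of $G$. Getting the orientation bookkeeping right in the same-cycle case is the delicate part; the other steps are routine.
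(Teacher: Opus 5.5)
The paper gives no proof of this statement. It quotes Tutte's theorem in the preliminaries as a classical black box and uses only its nontrivial direction, in the proof that the sampled graph $Y$ contains a perfect matching: there, the absence of a violated Tutte witness is turned into a perfect matching. So there is no argument in the paper to compare against. What you supply is the standard Lov\'asz edge-maximal-counterexample proof, and it is correct as outlined.

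The following steps all check out:
\begin{itemize}
\item the necessity direction;
\item monotonicity of the condition under adding edges;
\item the parity count $|U|\equiv o(G-U) \pmod 2$;
\item the explicit matching once every component of $G-U$ is a clique;
\item the different-cycles case. Here $ac\in M_1\setminus M_2$ and $bd\in M_2\setminus M_1$, so both do lie on cycles of $M_1\triangle M_2$. Also $d\notin\{a,c\}$ because $ab,bc\in E$.
\end{itemize}

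You leave the same-cycle case as a sketch; it closes cleanly as follows.
\begin{itemize}
\item Walk along $C$ from $b$ through the edge $bd$ until you first reach a vertex $x\in\{a,c\}$, and call this path $P$.
\item $ac\notin E(P)$, since its other endpoint is reached only after $x$.
\item $P$ starts with the $M_2$-edge $bd$. It ends with an $M_2$-edge, because the other $C$-edge at $x$ is the $M_1$-edge $ac$.
\item Hence $P$ alternates $M_2,M_1,\dots,M_2$, and every vertex of $P$ has its $M_2$-edge on $P$.
\item Therefore $(M_2\setminus E(P))\cup(M_1\cap E(P))\cup\{xb\}$ is a perfect matching. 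It uses neither $ac$ nor $bd$, and $xb\in E$, so it lies in $G$. This is the desired contradiction.
\end{itemize}

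Citing the theorem is the right economy for the paper. Your argument would make the preliminaries self-contained, including the direction the paper actually relies on.
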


\begin{theorem}[Perfect matching polytope in general graphs]
\label{thm:edmonds-perfect-matching-polytope}
    The convex hull of incidence vectors of perfect matchings in a
    general graph $G=(V,E)$ is
    \[
    \mathcal{PM}(G)
    =
    \left\{
    x\in \mathbb{R}_{\ge 0}^E:
    x(\delta(v))=1 \ \forall v\in V,\quad
    x(\delta(S))\ge 1 \ \forall S\subseteq V \text{ with } |S| \text{ odd}
    \right\}.
    \]
    
\end{theorem}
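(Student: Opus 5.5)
The inclusion of the convex hull of perfect matchings in the right-hand polytope $P$ is immediate. Every perfect matching $M$ satisfies $|M\cap\delta(v)|=1$. For an odd set $S$, at least one vertex of $S$ must be matched outside $S$, so $|M\cap \delta(S)|\ge 1$. Since $P$ is convex, the whole convex hull lies in $P$. The real content is the reverse inclusion. I would prove it by the classical induction on $|V|+|E|$, showing that every vertex $x$ of $P$ is a convex combination of perfect matchings.

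\textbf{Preliminary reductions.} If $|V|$ is odd, then $S=V$ gives the constraint $0=x(\delta(V))\ge 1$, so $P=\emptyset$ and there is nothing to prove. Assume $|V|$ is even and let $x$ be a vertex of $P$.
\begin{itemize}
\item If some $x_e=0$, delete $e$. The restriction of $x$ lies in $P(G-e)$, and induction gives a decomposition of $x$ into perfect matchings of $G-e$, which are also perfect matchings of $G$.
\item If some $x_e=1$ with $e=uv$, then all other edges at $u$ and $v$ have value $0$ and can be deleted as above.
\end{itemize}
So we may assume $0<x_e<1$ for all $e$. Then every vertex has degree at least $2$, hence $|E|\ge |V|$.

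\textbf{Case $|E|=|V|$.} Every vertex has degree exactly $2$, so $G$ is a disjoint union of cycles, and the degree constraints force $x_e=1/2$ on every edge. An odd cycle component $S$ would give $x(\delta(S))=0<1$, which is impossible. So all cycles are even, and $x$ is the average of the two perfect matchings formed by alternate edges.

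\textbf{Case $|E|>|V|$.} Since $x$ is a vertex, it is determined by $|E|$ linearly independent tight constraints. The nonnegativity constraints are not tight, and there are only $|V|$ degree constraints. Hence some odd-set constraint $x(\delta(S))=1$ is tight and independent of the degree constraints. Such an $S$ has $|S|\ge 3$ and $|V-S|\ge 3$, because singletons and their complements give constraints implied by the degree equations.
\begin{itemize}
\item Let $G_1$ be obtained by contracting $V-S$ to a single vertex $s'$, and $G_2$ by contracting $S$ to a single vertex $s$. Let $x_1,x_2$ be the induced weights.
\item Check that $x_i\in P(G_i)$. The new vertex has degree $x(\delta(S))=1$. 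An odd set in $G_i$ corresponds to an odd set in $G$, possibly after uncontracting the new vertex, which adds an odd number of vertices. In either case the corresponding cut constraint in $G$ gives value at least $1$.
\item Both $G_1$ and $G_2$ are strictly smaller than $G$, so by induction $x_1=\sum_j \lambda_j M^1_j$ and $x_2=\sum_k \mu_k M^2_k$ for perfect matchings $M^1_j$ of $G_1$ and $M^2_k$ of $G_2$.
\end{itemize}

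\textbf{Gluing, the main obstacle.} Each matching $M^1_j$ and each $M^2_k$ uses exactly one edge of $\delta(S)$. For a fixed $e\in\delta(S)$, the matchings of $G_1$ using $e$ carry total weight $x_e$, and likewise in $G_2$. For each $e$, I would pair up the $G_1$-matchings containing $e$ with the $G_2$-matchings containing $e$ via a product coupling, giving weight $\lambda_j\mu_k/x_e$ to the union $M^1_j\cup M^2_k$. Each such union is a perfect matching of $G$.

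It remains to check that these weights sum back to $x$:
\begin{itemize}
\item On edges inside $S$, the weights reproduce $x_1$, since for each fixed $j$ the coupled $G_2$-weights sum to $\lambda_j$.
\item On edges inside $V-S$, they reproduce $x_2$ by the symmetric argument.
\item On each $e\in\delta(S)$, the total weight is $x_e$.
\end{itemize}
This shows $x$ lies in the convex hull of perfect matchings of $G$, completing the induction.
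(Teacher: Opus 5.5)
The paper gives no proof of this statement. It quotes Edmonds' theorem as a standard fact and only uses the degree and odd-set constraints, so there is no argument of the paper's to compare against. Your proof is the classical inductive one: contract a tight nontrivial odd set, decompose both contracted graphs, and glue along $\delta(S)$. Its core is sound. The rank count giving a tight odd set with $|S|,|V-S|\ge 3$ when $|E|>|V|$ works, as does the check that $x_i\in P(G_i)$ and the coupling with weights $\lambda_j\mu_k/x_e$. Several local steps need repair, but none changes the approach.

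\textbf{The $x_e=1$ reduction.} This step does not give $0<x_e<1$. After deleting the zero edges at $u$ and $v$, the edge $uv$ survives as an isolated edge of value $1$ whose endpoints have degree $1$. So the claims that every vertex has degree at least $2$ and that $|E|\ge |V|$ are unjustified, and so is the case analysis built on them. For instance, if $G$ is itself a perfect matching and $x$ its incidence vector, neither of your two cases applies. The fix is to delete $u$ and $v$ outright. Since $x$ vanishes on every edge from $\{u,v\}$ to the rest, its restriction lies in $P(G-u-v)$, and adding $uv$ to each matching in the inductive decomposition handles $x$.

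\textbf{The case $|E|=|V|$.} The degree equations on an even cycle only force alternating values $a,1-a$, not $1/2$. The conclusion still holds. On each cycle, $x$ equals $aM_1+(1-a)M_2$ for that cycle's two alternating matchings. Taking the product over the cycle components then writes $x$ as a convex combination of perfect matchings of $G$.

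\textbf{Parallel edges and loops.} Contraction creates parallel edges, and also loops, which you should discard. So the induction must range over multigraphs. Your arguments go through verbatim there, provided each $e\in\delta(S)$ is kept as a distinct edge of both $G_1$ and $G_2$.

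\textbf{Parity under uncontraction.} Uncontracting does not add an odd number of vertices. It replaces one vertex by $|V-S|$ vertices, and since $|V-S|$ is odd the net change $|V-S|-1$ is even. That is why odd sets of $G_i$ correspond to odd sets of $G$.
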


\section{\texorpdfstring{Support-Constrained ${O}(n \log n)$}{Support-Constrained O-tilde(n)}-Thin Matching}

In this section, we prove  \cref{thm:main-support-constrained-matching}. The proof proceeds by
first keep trimming the fractional perfect matching \(x\) by deleting edges in cuts $(S,\overline{S})$ with $x(\delta(S))$ small,
and then constructing a matching inside each connected component
of the remaining graph. 


\begin{definition} \label{def:trimmed-fractional-solution}
    Given a fractional matching $x$ and a parameter $\varepsilon$, let $\widetilde x^{(\varepsilon)}$ be the trimmed version of $x$, produced as follows (when $\varepsilon$ is clear from the context, we use $\widetilde x$ instead):
    \begin{itemize}
        \item Initialize $\widetilde{x}$ to be $x$.
        \item As long as there is a connected component $C$ of $\widetilde{x}$ with a non-empty set $S\subsetneq C$ such that $\widetilde{x}(S,C-S)\leq \eps$, zero out all edges of the (induced) cut $(S,C-S)$.
    \end{itemize}
\end{definition}

\begin{lemma}
\label{lem:atomic-trimming-structure}
For any graph $G=(V,E)$,   and any connected component  \(C\) of $\tilde{x}$, we have
\[
x(E)-\tilde{x}(E)\le (|V|-1)\varepsilon.
\]
\end{lemma}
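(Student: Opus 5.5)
The plan is to charge the total deleted weight to the trimming steps and to bound the number of steps by a potential argument on the number of connected components. Each iteration of the procedure in \cref{def:trimmed-fractional-solution} picks a connected component \(C\) of (the support of) \(\widetilde x\) and a nonempty \(S\subsetneq C\) with \(\widetilde x(S,C-S)\le\varepsilon\). It then zeroes out exactly the edges of the induced cut \((S,C-S)\). So in one iteration the quantity \(\widetilde x(E)\) drops by precisely \(\widetilde x(S,C-S)\), which is at most \(\varepsilon\). Summing over iterations gives \(x(E)-\widetilde x(E)\le k\varepsilon\), where \(k\) is the total number of iterations performed. It therefore suffices to show \(k\le |V|-1\).

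To bound \(k\), track the number of connected components of the graph \((V,\supp(\widetilde x))\), counting isolated vertices as components. Initially this number is at least \(1\), and it can never exceed \(|V|\). I claim each iteration increases it by at least one. Before the step, \(C\) is connected. After deleting every edge between \(S\) and \(C-S\), no path inside \(C\) can join \(S\) to \(C-S\): every edge of the current support incident to \(C\) lies inside \(C\), and we have removed all edges crossing the partition \((S,C-S)\) of \(C\). Hence \(C\) splits into at least two components, namely at least one inside \(S\) and at least one inside \(C-S\), both nonempty. Components other than \(C\) are untouched. Thus the count strictly increases, so \(k\le |V|-1\). This also shows that the procedure terminates, which is implicitly needed for \(\widetilde x\) to be well defined.

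There is one subtle point. If \(\widetilde x(S,C-S)=0\), then \(C\) would not have been connected in the support to begin with. So every step genuinely deletes edges, and the splitting argument applies without degenerate cases. I do not expect a real obstacle here; the only care needed is to define connectivity with respect to \(\supp(\widetilde x)\) rather than \(G\), so that the component count is monotone and the "induced cut" is exactly the set of removed edges. The phrase "any connected component \(C\)" in the statement plays no role: the bound is global, and summing the per-step losses gives it directly.
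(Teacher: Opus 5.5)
Your proposal is correct and follows the paper's argument: each trimming step removes weight at most $\varepsilon$, and there are at most $|V|-1$ steps because every step increases the number of connected components of the support, which starts at least $1$ and never exceeds $|V|$. Your version is a little more careful than the paper's (you note the count increases by \emph{at least} one rather than exactly one, and you rule out zero-weight cuts), but the route is the same.
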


\begin{proof}
Each trimming step removes weight of at most $\varepsilon$. Furthermore, the trimming process ends in at most $|V|-1$ steps since each time the number of connected components increases by 1.  
So,  
$$ x(E)-\tilde{x}(E)\leq (|V|-1)\varepsilon.$$

\end{proof}

In the rest of this section, we will prove that \cref{alg:linear-thinness-matching} will produce $O(n\log n)$-thin matching by the right choice of parameter $\alpha = \Theta(n\log n)$.
For certifying the existence of a perfect matching in the sampled edges, we show that Tutte's condition holds with high probability. For the \(O(\alpha)\)-thinness guarantee, we use Karger-type arguments \cite{karger1994random}.

\begin{algorithm}[htb]
\caption{Support-Constrained Thin Matching}
\label{alg:linear-thinness-matching}
\begin{algorithmic}[1]
    \State \textbf{Input:} Graph $G$ with point $x$ in the perfect matching polytope and parameter $\alpha$.
    \State $\varepsilon \gets \frac{1}{8n}$
    \State $\widetilde{x} \gets $ the trimmed version with threshold $\varepsilon$ (\cref{def:trimmed-fractional-solution})
    \State Independently sample $\alpha$ copies of each edge $e \in E$ with probability
    $
        p_e = \widetilde{x}_e
        $
    \State $M \gets$  the maximum matching of the sampled edges.
    \If{$M$ is a perfect matching}
        \State \Return $M$ \Comment{a perfect matching}
    \Else
        \State \Return \textsc{Fail}
    \EndIf
    \
\end{algorithmic}
\end{algorithm}

\begin{definition}[Tutte witness]
For a graph $G=(V,E)$, a Tutte witness is a pair
\((U, \{O_1,\dots,O_q\})\)
where $\{O_1, O_2, \dots, O_q\}$ is a collection of pairwise disjoint
nonempty odd subsets of \(V\setminus U\), with \(q>|U|\).
\end{definition}

A graph \(H\) doesn't have a perfect matching,
if and only if, there is a  Tutte witness  $(U, \{O_1,\dots,O_q\})$ for some $q>|U|$ such that $O_1,\dots,O_q$  are  connected components of $H-U$. So, to prove that the sampled graph has a perfect matching it is enough to show there is no Tutte witness.

First, we observe the number of Tutte witnesses  is at most \(e^{O(n\log n)}\).

\begin{lemma}[Counting Tutte witnesses]
\label{lem:tutte-witness-count}
The number of Tutte witnesses of a graph $G=(V,E)$ with $|V|=2n$ is at most
\[
(2n+1)^{2n}.
\]
\end{lemma}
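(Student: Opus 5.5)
The plan is to inject the set of Tutte witnesses into the set of functions $V\to\{0,1,\dots,2n\}$. There are exactly $(2n+1)^{2n}$ such functions, so an injection gives the stated bound directly. The collection $\{O_1,\dots,O_q\}$ is a set of sets, so its members are unordered. To make the encoding canonical, we order the $O_i$ by their minimum vertex under a fixed ordering of $V$.

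\textbf{The generic encoding.} Given a witness $(U,\{O_1,\dots,O_q\})$ with the $O_i$ in this canonical order, define $f:V\to\{0,1,\dots\}$ by
\begin{itemize}
\item $f(v)=0$ if $v\in U$;
\item $f(v)=1$ if $v$ lies in neither $U$ nor any $O_i$;
\item $f(v)=i+1$ if $v\in O_i$.
\end{itemize}
This uses labels in $\{0,\dots,q+1\}$. Since the $O_i$ are pairwise disjoint, nonempty, and contained in $V\setminus U$, we have $q\le 2n-|U|\le 2n$.

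\textbf{Recovering the witness from $f$.} We get $U=f^{-1}(0)$. For each $i\ge 1$ we get $O_i=f^{-1}(i+1)$. The canonical ordering makes the labels determined by the unordered collection, so $f$ is a well-defined function of the witness, and distinct witnesses give distinct $f$.

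\textbf{The budget check.} The only issue is the label range $\{0,\dots,q+1\}$, which has $q+2$ elements. This fits inside $\{0,\dots,2n\}$ exactly when $q\le 2n-1$.

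\textbf{The case $q=2n$.} This is the one step where the naive count overshoots to $2n+2$ labels, so it needs separate handling. If $q=2n$, then $2n$ pairwise disjoint nonempty subsets of $V\setminus U$ force $U=\emptyset$ and every $O_i$ to be a singleton. So there is exactly one witness with $q=2n$: $U=\emptyset$ together with all $2n$ singletons.

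We map this witness to the constant function $f\equiv 0$. No generic witness produces this function. Indeed, $f\equiv 0$ would decode to $U=V$ with $q=0$, and that violates $q>|U|$. The extended map is therefore still injective.

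Hence the number of Tutte witnesses is at most $(2n+1)^{2n}$.
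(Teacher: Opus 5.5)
Your proof is correct and uses the same labeling idea as the paper's one-line argument: each vertex gets one of $2n+1$ labels recording whether it lies in $U$, in the leftover set, or in some $O_i$. Your separate treatment of the unique witness with $q=2n$ also closes a boundary case the paper skips. The paper's count assumes $U$, the leftover set, and the $O_i$ fit into $2n+1$ labeled blocks, which is off by one exactly when $q=2n$.
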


\begin{proof} The statement simply uses that the number of $2n+1$ disjoint (possibly empty) subsets of a set of size $2n$ is exactly $(2n+1)^{2n}$.
\end{proof}

\begin{lemma}
\label{lem:general-atomic-component-thin-sampling}
Let \(G=(V,E)\) be a general graph with \(|V|=2n\), let
\(x\in \mathcal{PM}(G)\). 
Run \cref{alg:linear-thinness-matching} with
\(\alpha=c n\log n\) for a sufficiently large constant \(c\). 
Let $Y$ be the sampled edges, then with probability at least \(1-n^{-4}\),
the set $Y$ contains a perfect matching.

\end{lemma}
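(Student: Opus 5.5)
The plan is to show that, with high probability, the sampled multigraph $Y$ has no Tutte witness certifying the absence of a perfect matching, using a union bound over all witnesses. By \cref{thm:tutte}, $Y$ has no perfect matching exactly when there is a Tutte witness $(U,\{O_1,\dots,O_q\})$ with $q>|U|$ whose odd sets $O_i$ are unions of components of $Y-U$. Define the \emph{bad event} $B_W$ for a fixed witness $W=(U,\{O_1,\dots,O_q\})$ as follows: no sampled edge goes from any $O_i$ to $V\setminus(U\cup O_i)$. If $Y$ has no perfect matching, then some $B_W$ occurs. By \cref{lem:tutte-witness-count} there are at most $(2n+1)^{2n}=e^{O(n\log n)}$ witnesses. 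It therefore suffices to show $\Pr[B_W]\le e^{-\alpha/4}$ for every fixed $W$, and then take $\alpha=cn\log n$ with $c$ large enough that $(2n+1)^{2n}e^{-\alpha/4}\le n^{-4}$.

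\textbf{Bounding $\Pr[B_W]$.} Fix $W$ and let $F$ be the set of edges of $E$ with one endpoint in some $O_i$ and the other endpoint in $V\setminus(U\cup O_i)$. Since every $O_i$ is odd and $x\in\mathcal{PM}(G)$, \cref{thm:edmonds-perfect-matching-polytope} gives $x(\delta(O_i))\ge 1$, so $\sum_i x(\delta(O_i))\ge q$. The edges of $\bigcup_i\delta(O_i)$ that go into $U$ have total weight at most $\sum_{u\in U}x(\delta(u))=|U|$. Hence $\sum_i x(\delta(O_i)\cap F)\ge q-|U|\ge 1$. An edge of $F$ lies in $\delta(O_i)$ for at most two indices $i$, namely those of its two endpoints. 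Therefore
\[
x(F)\ge \tfrac12 .
\]
By \cref{lem:atomic-trimming-structure} with $\varepsilon=\frac1{8n}$, the trimming removes total weight at most $(2n-1)\varepsilon<\frac14$. This gives $\widetilde x(F)\ge \frac14$. Each edge $e\in F$ has $\alpha$ independent copies, each kept with probability $p_e=\widetilde x_e\le 1$, and $B_W$ requires that none of these copies is sampled. So
\[
\Pr[B_W]\le \prod_{e\in F}(1-\widetilde x_e)^{\alpha}\le \exp\bigl(-\alpha\,\widetilde x(F)\bigr)\le e^{-\alpha/4}.
\]

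\textbf{Main obstacle.} The only delicate step is this weight accounting. One has to check that an edge from $O_i$ to $O_j$ is counted at most twice, and that edges into $U$ are charged against the degree constraints of the vertices of $U$ rather than against the $O_i$. The slack $q-|U|\ge 1$ must also survive the trimming loss, which is exactly why $\varepsilon$ is chosen as $\frac1{8n}$. Everything else is the routine union bound described above. Note that this lemma does not actually need the connectivity produced by trimming; that connectivity will matter only for the thinness part of the analysis.
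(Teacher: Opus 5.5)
Your proposal is correct and follows the paper's proof essentially step by step. You use the same edge set $F=\bigcup_i\delta(O_i)\setminus\delta(U)$, the same odd-set and degree accounting giving $x(F)\ge (q-|U|)/2\ge \tfrac12$, the same trimming-loss bound yielding $\widetilde x(F)\ge\tfrac14$, and the same union bound over the $(2n+1)^{2n}$ Tutte witnesses.
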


\begin{proof}
Fix a Tutte witness \((U, \{O_1, \cdots, O_q \})\) and $R = V \setminus (U \cup O_1 \cup \dots \cup O_q)$, and 
let \(F = \bigcup_i \delta(O_i) \smallsetminus \delta(U)\). We will show that
\begin{equation}\label{eq:txF}
    \widetilde{x}(F)\geq 1/4.
\end{equation}
Observe that if the sampled graph violates the witness \((U, \{ O_1, O_2 \dots O_q\})\), then no
edge of \(F\) is sampled.
\[
\Pr[Y\cap F=\emptyset]
= \prod_{e\in F}(1- \widetilde x_e)^\alpha
\le
\prod_{e\in F}\exp(-\alpha \widetilde x_e)
= \exp(-\alpha\tilde{x}(F)) \underset{\eqref{eq:txF}}{\leq}
\exp(-\alpha/4).
\]

By \cref{lem:tutte-witness-count}, the number of Tutte
witnesses in $G$ is at most
$(2n+1)^{2n}$
. Thus
\[
\Pr[\text{some Tutte witness is violated in the sampled graph } Y]
\le
(2n+1)^{2n}\exp(-\alpha/4)
\le
n^{-4}
\]
for a sufficiently large constant \(c\). Therefore, by Tutte's theorem, with
probability at least \(1-n^{-4}\), the sampled graph contains a
perfect matching in $G$.

It remains to show \eqref{eq:txF}. For every \(i\), the size of the set \(O_i\) is odd, so the odd-set constraints give
\[
q
\underset{x(\delta(O_i))\ge 1,\forall i}{\le}
\sum_{i=1}^q x(\delta(O_i))
\le
x(\delta(U))+2x(F) \underset{x(\delta(v))\leq 1,\forall v}{\leq} |U| + 2x(F)
\]
In other words,
\[
x(F)
\ge
\frac{q-|U|}{2}
\geq
\frac{1}{2}
\]
For \(|V|=2n\) and \(\varepsilon=1/(8n)\), \cref{lem:atomic-trimming-structure} implies
$x(F)-\widetilde{x}(F)\leq x(E) - \widetilde{x}(E) \leq \frac{1}{4}$. This proves \eqref{eq:txF}.
\end{proof}

Variants of the following lemma have been studied many times and we skip the proof for brevity (the proof goes by a simple Chernoff bound and a cut-counting argument). See e.g., \cite{AGMOS17}. We note that $\alpha$ can be chosen to $O(n\log n/\log\log n)$ with the same conclusion. 
\begin{lemma}\label{lem:thinnessample}
    Given a graph  \(G=(V,E)\), and $x:E\to\R_{\geq 0}$ such that the minimum cut of the weighted graph $(V,E,x)$ is at least $1/8n$. Sample $\alpha$ copies of every edge $e$ of $G$ independently with probability $x_e$ for $\alpha=c n\log n$ for a sufficiently large constant $c>0$. Let $Y$ be the sampled graph. Then, with high probability, the following holds: for every set $S\subset V$
    $$ Y(\delta(S)) \leq O(\alpha)x(\delta(S))$$
\end{lemma}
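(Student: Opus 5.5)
The plan is the standard Karger-style argument: a Chernoff bound for each fixed cut, combined with a count of how many cuts have a given weight. Let $\lambda \ge 1/(8n)$ denote the minimum cut of $(V,E,x)$. Fix a set $S$ and write $\mu_S := \E[Y(\delta(S))] = \alpha\, x(\delta(S))$. Since $Y(\delta(S))$ is a sum of independent Bernoulli variables, for $\beta \ge 2e$ the multiplicative Chernoff bound gives
\[
\Pr\bigl[Y(\delta(S)) > \beta \mu_S\bigr] \le 2^{-\beta\mu_S}.
\]
Suppose $S$ is an $\kappa$-approximate cut, meaning $x(\delta(S)) \le \kappa\lambda$. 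Then
\[
\mu_S \ge \alpha \lambda \ge \frac{c n\log n}{8n} = \frac{c}{8}\log n,
\]
so the failure probability for $S$ is at most $n^{-\beta c\, x(\delta(S))/(8\lambda)}$.

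Next I would group the cuts dyadically. By Karger's cut-counting theorem, the number of cuts with $x(\delta(S)) \le \kappa\lambda$ is at most $|V|^{2\kappa}$; this holds for weighted graphs, and any halving issues are absorbed into constants. Cuts with $x(\delta(S)) \in [2^{j}\lambda, 2^{j+1}\lambda)$ then number at most $(2n)^{2^{j+2}}$. Each such cut fails with probability at most $n^{-\beta c\, 2^{j}/8}$. Taking $\beta$ a fixed constant and $c$ large enough, the exponent $\beta c\, 2^j/8$ beats $2^{j+2}\log(2n)/\log n + 2$. Summing over $j \ge 0$ bounds the total failure probability by $\sum_j n^{-2-2^j} = O(n^{-2})$. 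On the complementary event, every $S$ satisfies $Y(\delta(S)) \le \beta\alpha\, x(\delta(S)) = O(\alpha)\,x(\delta(S))$.

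The main obstacle is that the hypothesis only gives $\lambda \ge 1/(8n)$, so the minimum expected cut value is only $\Theta(c\log n)$, not larger. It is therefore essential that Chernoff is applied relative to $\lambda$ and paired with cut counting parametrized by $\kappa = x(\delta(S))/\lambda$. A bound depending only on the absolute cut weight would not suffice. I would also check that cut counting applies even though the graph might be disconnected. That case does not arise here: the min cut hypothesis is applied per component, or directly, and a positive minimum cut already forces connectivity. Finally, the claimed improvement to $\alpha = O(n\log n/\log\log n)$ would use the sharper Chernoff tail $(e/\beta)^{\beta\mu}$ with $\beta \approx \log n/\log\log n$, but I would not pursue that refinement.
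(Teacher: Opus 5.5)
Your proposal is correct and follows the same approach the paper indicates. The paper omits the proof and only says it is a Chernoff bound combined with a cut-counting argument (citing AGMOS17), which is exactly your per-cut Chernoff bound with dyadic grouping by $x(\delta(S))/\lambda$ and Karger's $|V|^{2\kappa}$ bound. That argument uses $\lambda \ge 1/(8n)$, $|V|\le 2n$ and $x_e\le 1$, all of which hold in the paper's application to the trimmed components.
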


\begin{proof} [Proof of Theorem \ref{thm:main-support-constrained-matching}]
Run \cref{alg:linear-thinness-matching} with parameter  \(\alpha=c n\log n\) and let $Y$ be the sampled graph.
Let \(C_1,\dots,C_k\) be the components of $\widetilde{x}$, the trimmed fractional solution. By
\cref{lem:general-atomic-component-thin-sampling}, $Y$ has a perfect matching $M$ with high probability.

Fix $1\leq i\leq k$. By definition of the trimming process, the induced graph $G[C_i]$ with weight $\tilde{x}$ is $\eps=1/8n$-edge connected. So, by \cref{lem:thinnessample}, with high probability $Y\cap G[C_i]$ is $O(\alpha)$ thin w.r.t. $\tilde{x}$. Since $M\subseteq Y$ and $\tilde{x}\leq x$, $M\cap G[C_i]$ is $O(\alpha)$-thin w.r.t. $x$. Taking a union bound over all connected components proves the claim. 

\end{proof}

Finally, as mentioned in the introduction, there are graph instances for which
every matching in the support has thinness linear in the number of vertices.
Thus, \cref{thm:main-support-constrained-matching} is optimal up to an
\(O(\log n)\) factor.

\section{\texorpdfstring{$\polylog(n)$}{polylog(n)} Relaxed Thin Matching}
In this section, we prove \cref{thm:main-relaxed-thin-matching}. We start by choosing an $O(\log (n) \log\log (n))$-tree-cut-sparsifier of $G=(V,E,x)$, say $\cT=(\mathcal{V}, \mathcal{E},y)$. Note that one can instead use the algorithmic tree-cut-sparsifier from \cref{thm:single-tree-cut-sparsifier-alg}, at the cost of an additional factor of $\sqrt{\log (n)}$ in the thinness guarantee. 


\begin{definition}[Tree Cuts]
    Let $\cT = (\mathcal{V}, \mathcal{E})$ be a rooted tree at a vertex $r$ and $f \in \mathcal{E}$. We indicate the endpoints of $f$ by $r(f)$ and $s(f)$, where $r(f)$ is the one closer to the root.
    Define $\mathcal{V}^{r(f)}, \mathcal{V}^{s(f)}$ to be the respective connected components of $\cT$ after deleting edge $f$. 
    Similarly, we  write $V^{r(f)} = V \cap \mathcal{V}^{r(f)}$ and $V^{s(f)} = V \cap \mathcal{V}^{s(f)}$.

\end{definition}

\begin{definition}[Load of edges]
    For the rooted \(\cT = (\mathcal{V}, \mathcal{E})\)  with $V \subset \mathcal{V}$, a perfect matching $M$ of $G$, and an 
edge \(f \in \mathcal{E}\), define the load of
\(M\) on \(f\) to be 

\[
\operatorname{load}_M(f)
=
\left|\{(u,v)\in M:
\text{ the unique path from $u$ to $v$ in $\cT$ uses $f$}\}\right|. 
\]
\end{definition}
The following proposition gives a sufficient condition to prove the thinness of a perfect matching $M$.
\begin{proposition}[Load vs cut value]\label{prop:load-vs-cut}
    Let $\cT = (\mathcal{V}, \mathcal{E},y)$ be a  $q$-tree-cut-sparsifier of a weighted graph $G= (V, E,x)$ where $x \in \mathcal{PM}(G)$. For any perfect matching $M$ of $G$, and $\beta>0$, if 
    \begin{equation}\label{eq:LoadvsCut-assumption}
    \beta  \cdot x(V^{r(f)}, V^{s(f)}) \geq  \operatorname{load}_M(f), \quad \quad \forall f\in {\cal E}
    \end{equation}
    then  $M$ is $\beta \cdot q$-thin.
\end{proposition}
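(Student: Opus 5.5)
Fix a cut $S\subseteq V$; we want $|M\cap\delta(S)|\le \beta q\, x(\delta(S))$. The plan is to pass to the tree. By \cref{def:tree-cut-sparsifier}, choose a cut $(\cS,\cV-\cS)$ of $\cT$ that projects to $(S,V-S)$ and attains $\lambda_\cT(S)=\sum_{f\in\delta_\cT(\cS)} y(f)\le q\, x(\delta(S))$.

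\textbf{Charging matched pairs to tree edges.} Each pair $(u,v)\in M$ crossing $(S,V-S)$ has $u$ and $v$ on opposite sides of $(\cS,\cV-\cS)$. Hence the unique $u$–$v$ path in $\cT$ uses at least one edge $f\in\delta_\cT(\cS)$. Charging each such pair to one of these edges gives
\[
|M\cap\delta(S)|\le \sum_{f\in\delta_\cT(\cS)}\load_M(f)\le \beta\sum_{f\in\delta_\cT(\cS)} x(V^{r(f)},V^{s(f)}),
\]
where the second inequality is the hypothesis \eqref{eq:LoadvsCut-assumption}.

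\textbf{Bounding each term by $y(f)$.} It remains to show $x(V^{r(f)},V^{s(f)})\le y(f)$ for every tree edge $f$. Deleting $f$ produces a tree cut $(\cV^{s(f)},\cV^{r(f)})$, whose total weight is exactly $y(f)$. This cut projects to $(V^{s(f)},V^{r(f)})$, so $\lambda_\cT(V^{s(f)})\le y(f)$. The lower bound in the sparsifier definition then gives
\[
x(\delta(V^{s(f)}))\le\lambda_\cT(V^{s(f)})\le y(f).
\]
One degenerate case needs separate handling: if $V^{s(f)}$ is empty or all of $V$, the $x$-cut is $0$. The inequality then holds trivially, and no matched pair routes through $f$, so these edges contribute nothing on either side.

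\textbf{Conclusion.} Combining the two steps,
\[
|M\cap\delta(S)|\le \beta\sum_{f\in\delta_\cT(\cS)} y(f)=\beta\,\lambda_\cT(S)\le \beta q\, x(\delta(S)).
\]
Since $S$ was arbitrary, $M$ is $\beta q$-thin.

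The main point needing care is the second step: single-edge tree cuts are legitimate competitors in the minimum defining $\lambda_\cT$. This uses only the left inequality of the sparsifier guarantee. The first step uses only the right inequality, through the optimal choice of $\cS$.
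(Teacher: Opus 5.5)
Your proof is correct and follows the paper's argument essentially step for step. Both charge each crossing pair of $M$ to an edge of a minimum projecting tree cut, apply the hypothesis edge by edge, and bound $x(V^{r(f)},V^{s(f)})\le y_f$ using the sparsifier's lower inequality. Your explicit justification of that last bound spells out what the paper compresses into ``the tree never underestimates the value of a cut'': the single-edge cut $\cT-f$ competes in the minimum defining $\lambda_\cT(V^{s(f)})$, and the degenerate case $V^{s(f)}\in\{\emptyset,V\}$ is handled separately.
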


\begin{proof}
Fix an arbitrary \(\varnothing\subset S\subset V\), we show that $M$ is $\beta\cdot q$-thin across this cut. 
Let $(\cS,{\cal V}-\cS)$ be a cut projecting onto $(S, V - S),$ observe that 
\begin{equation}\label{eq:Mload}
|M \cap \delta(S)| \leq \sum_{f \in \delta_{{\cal T}}(\cS)} \operatorname{load}_M(f)
\end{equation}
To see this, for every $f \in M \cap \delta(S)$, charge $f$ to one of the edges on the unique path between endpoints of $f$ across the cut $(\cS,\cV-\cS)$; such an edge exists because $(\cS,\cV-\cS)$ projects to $(S,V-S)$.
We will show that for any edge $f\in \delta_{{\cal T}}(\cS)$, 
\begin{equation}
    \label{eq:loadbetay}
    \operatorname{load}_M(f)\leq \beta y_f.
\end{equation}
Assuming this, using \eqref{eq:Mload} we have
$$ |M\cap \delta(S)|\leq \beta\cdot y(\delta_{{\cal T}}(\cS)). $$
Now, let $(\cS,\cV-\cS)$ be the smallest cut (in $\cT$) that projects to $(S,V-S)$. It follows that
$$ |M\cap \delta(S)| \leq \beta\cdot y(\delta_{{\cal T}}(\cS)) = \beta\cdot \lambda_{{\cal T}}(S)\leq \beta\cdot q\cdot x(\delta(S)),$$
where the last inequality uses the assumption that ${\cal T}$ is a $q$-cut-sparsifier. So, $M$ is $\beta\cdot q$ thin across $(S,V-S)$ as desired. 
 
It remains to prove \eqref{eq:loadbetay}.
Using the proposition's assumption we have
\[
\operatorname{load}_M(f) \underset{\eqref{eq:LoadvsCut-assumption}}{\leq}  \beta \cdot  x(\delta(V^{r(f)})) \leq \beta y_f
\]
where in the second inequality we used that $y_f \geq x(V^{r(f)}, V^{s(f)})$ as the tree never underestimates the value of a cut.
\end{proof}

\begin{definition}[Discrepancy of cuts]
    For a bipartite graph $G=(L\cup R,E,x)$, and $U\subseteq L\cup R$, define the discrepancy of $U$ as $\Delta_{\text{bip}}(U)= \vert \vert U \cap L\vert - \vert U \cap R\vert \vert$.
    For a non-bipartite graph $G=(V,E,x)$, we define $\Delta_{\text{non-bip}}(U) = \mathbb{I}\{|U|\ \text{is odd}\}$. We drop the subscript when it is clear in the context. 
\end{definition}

Assume we want to have the matching based on the given tree $\cT = (\mathcal{V}, \mathcal{E})$. First, we consider only the tree-cuts, i.e., $(V^{r(f)}, V^{s(f)})$ for some edge $f\in {\cal E}$. One can observe for every perfect matching $M$ of $G$ 
$$\vert  M\cap \delta(V^{r(f)}) \vert \geq \Delta(V^{r(f)}). $$

\begin{lemma}
\label{lem:discrepancy-lowerbound}
Let $x \in \mathcal{PM}(G)$ for a graph $G$ on the vertex set $V$, then we have 
\[
x(\delta(S)) \ge \Delta_{\text{bip}}(S) \qquad \varnothing \neq S \subseteq V,
\]
if $G$ is a bipartite graph, and similarly,
\[
x(\delta(S)) \ge \Delta_{\text{non-bip}}(S) \qquad \varnothing \neq S \subset V.
\]
\end{lemma}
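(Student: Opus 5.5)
The plan is to write $x$ as a convex combination of perfect matchings and reduce to a statement about a single integral perfect matching. By \cref{thm:edmonds-perfect-matching-polytope} (or Birkhoff's theorem in the bipartite case) we can write $x=\sum_j \lambda_j \mathbbm{1}_{M_j}$ with $\lambda_j\ge 0$, $\sum_j\lambda_j=1$, and each $M_j$ a perfect matching of $G$. Then $x(\delta(S))=\sum_j \lambda_j |M_j\cap\delta(S)|$. It therefore suffices to show that every perfect matching $M$ satisfies $|M\cap \delta(S)|\ge \Delta(S)$; averaging over $j$ then gives the claim.

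\emph{Bipartite case.} Fix a perfect matching $M$ and a set $S$. Each vertex of $S\cap L$ is matched by $M$ either to a vertex of $S\cap R$ or across the cut, and similarly for each vertex of $S\cap R$. Edges of $M$ inside $S$ pair one vertex of $S\cap L$ with one of $S\cap R$. Let $k$ be the number of such edges. Then $|M\cap\delta(S)| = (|S\cap L|-k)+(|S\cap R|-k)$. Since $k\le \min(|S\cap L|,|S\cap R|)$, this is at least $\big||S\cap L|-|S\cap R|\big| = \Delta_{\text{bip}}(S)$.

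\emph{Non-bipartite case.} If $|S|$ is even there is nothing to prove, since $\Delta_{\text{non-bip}}(S)=0$. If $|S|$ is odd, the edges of $M$ inside $S$ cover an even number of vertices of $S$. Hence at least one vertex of $S$ is matched outside $S$, so $|M\cap\delta(S)|\ge 1$. Alternatively, this case follows directly from the odd-set constraint $x(\delta(S))\ge 1$ in \cref{thm:edmonds-perfect-matching-polytope}, without any decomposition.

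There is no real obstacle here. The only point needing care is that in the bipartite case the paper's $\mathcal{PM}(G)$ is the convex hull of perfect matchings, so the decomposition is available by definition. One could instead argue directly from the degree constraints: $x(\delta(S))\ge \sum_{v\in S\cap L}x(\delta(v)) - \sum_{v\in S\cap R}x(\delta(v))=|S\cap L|-|S\cap R|$, because edges inside $S$ cancel in this difference. By symmetry the same holds with $L$ and $R$ swapped, which gives $x(\delta(S))\ge\Delta_{\text{bip}}(S)$. I would present this degree-counting argument for the bipartite case, since it avoids the decomposition entirely.
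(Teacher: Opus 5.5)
Your proof is correct, and the argument you settle on is the paper's. The paper also uses $x(\delta(v))=1$ on $S\cap L$ and $S\cap R$ with the internal edges $x(S_L,S_R)$ cancelling, and uses the odd-set constraint in the non-bipartite case. Your first route, decomposing $x$ into perfect matchings and bounding each one, is also valid, since the paper defines $\mathcal{PM}(G)$ as the convex hull, but it is not needed.
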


\begin{proof}
Fix $\varnothing \subset S \subseteq V$. Let $G=(L\cup R,E,x)$ be a bipartite graph. We write $S_L:=S\cap L,S_R:=S\cap R$. Since $x \in \mathcal{PM}(G)$,
\[
x(S_L,R) = |S_L|,
\qquad
x(L,S_R) = |S_R|.
\]
Without loss of generality, assume $|S_L| \geq |S_R|$. Then 
\[
\Delta_{\text{bip}}(S) = \vert S_L\vert - \vert S_R \vert = \vert x(S_L, R) - x(L, S_R) \vert \leq \vert x(S_L, R\setminus S_R) + x(L\setminus S_L, S_R)\vert = x(\delta(S)).
\]
This proves the first assertion.
Now, assume $G=(V,E,x)$ is not necessarily bipartite; if $|S|$ is odd then the inequality 
\[
x(\delta(S)) \geq \Delta_{\text{non-bip}}(S)
\]
follows from the odd set constraint.
\end{proof}

With the above two lemmas to prove \cref{thm:main-relaxed-thin-matching} it is enough to design an algorithm that outputs a perfect matching $M$ such that for any edge $f\in\cE$,
$$ \load_M(f) \leq \Delta(V^{r(f)}).$$
We design an algorithm for which all of these are in fact equalities. 
\begin{lemma} \label{lem:load-vs-discrepancy-equality}
    There is a polynomial time algorithm that outputs a perfect matching $M$ such that 
    for any $f \in \mathcal{E}$
    \[   \Delta(V^{r(f)}) = \operatorname{load}_M(f).
    \]
\end{lemma}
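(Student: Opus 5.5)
We construct $M$ bottom-up on the rooted tree $\cT$ by greedy pairing inside subtrees. For each node $v\in\cV$, let $\cV_v$ denote the vertex set of the subtree rooted at $v$, and $V_v=V\cap\cV_v$. We process nodes in post-order, maintaining the invariant that after processing $v$, all vertices of $V_v$ are matched among themselves except for a set $U_v\subseteq V_v$ of \emph{unmatched} vertices with $|U_v|=\Delta(V_v)$.

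In the bipartite case the invariant additionally requires that $U_v$ lies entirely in $L$ or entirely in $R$. In the non-bipartite case $|U_v|\in\{0,1\}$.

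To process $v$, collect the multiset $W=U_{c_1}\cup\dots\cup U_{c_k}$ over the children $c_i$ of $v$, together with $v$ itself if $v\in V$. Then match greedily.

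In the bipartite case, pair any $L$-vertex of $W$ with any $R$-vertex of $W$ until only one side remains. The leftover has size $\big||W\cap L|-|W\cap R|\big|$. This equals $\Delta(V_v)$, because every vertex of $V_v\setminus W$ is already matched inside $V_v$ with one $L$ and one $R$ endpoint per pair.

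In the non-bipartite case, pair the vertices of $W$ arbitrarily, leaving at most one unmatched vertex. Its count is exactly $|W|\bmod 2=|V_v|\bmod 2=\Delta(V_v)$.

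At the root $r$ we have $\Delta(V)=0$: in the bipartite case $|L|=|R|$, and in the other case $|V|=2n$ is even. So $U_r=\emptyset$ and $M$ is a perfect matching on $V$. It is a matching on $\binom{V}{2}$ or $L\times R$, which is allowed since Theorem~\ref{thm:main-relaxed-thin-matching} does not require $M\subseteq\supp(x)$. The running time is clearly polynomial.

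Now fix $f$ with $s(f)=v$, so that $V^{s(f)}=V_v$. A pair $(a,b)\in M$ uses $f$ exactly when one endpoint lies in $V_v$ and the other does not. Every pair formed while processing nodes inside the subtree of $v$ has both endpoints in $V_v$. Every vertex of $U_v$ stays unmatched when $v$ finishes, so it is matched later at a proper ancestor of $v$, to a vertex outside $V_v$; two vertices of $U_v$ can never be paired together later, since in the bipartite case they lie on the same side. Hence $\load_M(f)=|U_v|=\Delta(V_v)$.

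Finally, $\Delta(V^{s(f)})=\Delta(V^{r(f)})$ since $V^{r(f)}$ and $V^{s(f)}$ partition $V$. In the bipartite case this uses $|L|=|R|$; in the non-bipartite case it uses that $|V|$ is even.

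The step needing the most care is the bipartite invariant that the leftovers are one-sided. Without it, two leftover vertices of $V_v$ could later be matched to each other, or one-sidedness could break, so that $\load_M(f)$ would exceed $|U_v|$. One-sidedness makes the leftover count exactly $\Delta$, and this is what gives equality rather than just an upper bound.
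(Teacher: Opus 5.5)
Your proposal is correct and is essentially the paper's proof: a bottom-up greedy pairing (the paper's Tree Pairing Algorithm), where maximality of the pairing inside each subtree leaves exactly $\Delta(V^{s(f)})$ unmatched vertices, all on one side in the bipartite case and at most one otherwise, and these are exactly the pairs later matched across $f$. Your one-sidedness invariant, together with the argument that each leftover vertex of $U_v$ is later matched to a vertex outside $V_v$, makes explicit the step $\operatorname{load}_M(f)=|U_v|$, which the paper's short proof leaves implicit.
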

Observe that the above lemma together with \cref{prop:load-vs-cut} and \cref{lem:discrepancy-lowerbound} proves \cref{thm:main-relaxed-thin-matching}. So, in the rest of this section we prove this lemma.

We use \cref{alg:tree-pairing-with-graph-restriction} to construct such a perfect matching.  
The algorithm gets a feasibility graph $H=(V,E')$  and it is only allowed to use edges of $H$ to match. If $G=(L\cup R,E,x)$ is bipartite, we let $H$ be the complete bipartite graph on $(L,R)$ otherwise, we let $H$ be a complete graph on $V$.
 We match the vertices by a bottom-up greedy pairing procedure.

\begin{algorithm}[htb]
\caption{Tree Pairing Algorithm}
\label{alg:tree-pairing-with-graph-restriction}
\begin{algorithmic}[1]
    \State \textbf{Input:} 
    A tree $\cT = (\mathcal{V}, \mathcal{E})$ where $V \subseteq \mathcal{V}$ and a feasibility graph $H$ on the vertex set $V$.
    \State $M \gets \varnothing$
    \State Root the tree at an arbitrary vertex $r \in \mathcal{V}$.
    \State  $L_u\gets \varnothing$ forall $u \in \mathcal{V}$.
    \While{ the tree $\cT$ is not empty}
        \State Pick a leaf $\ell\neq r$ if one exists; otherwise set $\ell=r$.
        \If{ $\ell \in V$}
            \State $L_\ell \gets L_\ell \cup \{\ell\}$
        \EndIf
        \While{there exist distinct $u,v\in L_\ell$ such that $(u,v)\in E(H)$}
        \State $M \gets M \cup (u, v)$
        \State $L_\ell \gets L_\ell  \setminus\{u, v\}$
        \EndWhile
        \If{ $\ell \neq r$}
            \State $p \gets $ parent of $\ell$
            \State $L_p \gets L_p \cup L_\ell$
        \EndIf
        \State Remove $\ell$ from $\cT$.
    \EndWhile
    \State \Return $M$
\end{algorithmic}
\end{algorithm}


\begin{proof}[Proof of \cref{lem:load-vs-discrepancy-equality}]
    Fix an arbitrary edge $f \in \mathcal{E}$. The main observation is that \cref{alg:tree-pairing-with-graph-restriction}  always chooses a {\em maximal} matching supported on the feasibility graph $H$ in the subtree rooted at $s(f)$.
    
    In particular, in the non-bipartite case, since $H$ is a complete graph, either every vertex of $H$ in the subtree of $s(f)$ gets matched, or exactly one remains. The latter happens only if $|V^{s(f)}|$ is odd, i.e., $\Delta_{\text{non-bip}}(V^{r(f)})=1$.
    In the bipartite case $G=(L\cup R,E,x)$, since $H$ is a complete bipartite graph, the number of unmatched vertices of $G$ in the tree rooted at $s(f)$ is exactly equal to 
    $$\left| |L\cap V^{s(f)}| - |R\cap V^{s(f)}|\right| =\Delta_{\text{bip}}(V^{r(f)})  $$ as desired.
\end{proof}

\begin{remark}
Note that the matching produced by \cref{alg:tree-pairing-with-graph-restriction} by no means respects the support of the given vector $x$; it simply matches vertices
according to their positions in the tree sparsifier.
\end{remark}

\printbibliography
\end{document}